\title{Gain-Loss-Driven travelling waves in PT-Symmetric Nonlinear Metamaterials}
\author[1,4]{M. Agaoglou\thanks{makrina{\_}agao@hotmail.com}}
\author[2,3]{M.\ Fe\v{c}kan\thanks{michal.feckan@gmail.com}}
\author[2,3]{M.\ Posp\'{i}\v{s}il\thanks{michal.pospisil@mat.savba.sk}}
\author[4]{V.M. Rothos\thanks{rothos@auth.gr}}
\author[5]{H.\ Susanto\thanks{hsusanto@essex.ac.uk}}
\affil[1]{Department of Mathematics and Statistics\\ University of Massachusetts, Lederle Graduate Research Tower, Amherst, MA 01003, USA}
\affil[2]{Department of Mathematical Analysis and Numerical Mathematics, Comenius University in Bratislava, Mlynsk\'a dolina, 842 48 Bratislava, Slovakia}
\affil[3]{Mathematical Institute of Slovak Academy of Sciences, \v{S}tef\'anikova 49, 814 73 Bratislava, Slovakia}
\affil[4]{Lab of Nonlinear Mathematics and Department of Mechanical Engineering, Faculty of Engineering,\\ Aristotle University of Thessaloniki, Thessaloniki 54124, Greece}
\affil[5]{Department of Mathematical Sciences, University of Essex, Wivenhoe Park, Colchester CO4 3SQ, United Kingdom}
\newtheorem{theorem}{Theorem}[section]
\newtheorem{lemma}[theorem]{Lemma}
\numberwithin{equation}{section}
\DeclareMathOperator{\sech}{sech}
\DeclareMathOperator{\csch}{csch}
\def\R{\mathbb R}
\def\eu{\mathrm{e}}
\begin{document}

\maketitle

\begin{abstract}
In this work we investigate a one-dimensional parity-time (PT)-symmetric magnetic metamaterial consisting of split-ring dimers having gain or loss. Employing a Melnikov analysis we study the existence of localized travelling waves, i.e.\ homoclinic or heteroclinic solutions. We find conditions under which the homoclinic or heteroclinic orbits persist. Our analytical results are found to be in good agreement with direct numerical computations. For the particular nonlinearity admitting travelling kinks, numerically we observe homoclinic snaking in the bifurcation diagram. The Melnikov analysis yields a good approximation to one of the boundaries of the snaking profile.
\end{abstract}



{\bf Keywords}: PT-Symmetry, Melnikov Theory, Nonlinear Metamaterials, homoclinic snaking



\section{Introduction}
Metamaterials are composite materials that are engineered structurally (rather than chemically) to have exotic properties that may not be found in nature. They are artificial materials made of ``artificial atoms" (called meta-atoms) that are placed periodically. Together with the properties of the ``atoms", the geometry (e.g., the perfectly periodic arrangement) then creates non-natural, but desirable effective behaviours. The development of this new type of materials was due to the work of Pendry \cite{Pendry1,Pendry2}, following the proposal of Veselago 
\cite{Veselago} on negative refraction. While Veselago studied continuous systems in his pioneering article, Pendry, on the other hand, started from discrete systems that lead to metamaterials \cite{Smith}. Because Pendry considered a mean field (i.e., homogenization) approach that lead him to the findings, certain conditions on the wavelength and the characteristic length of metamaterials should hold
in order to validate the description.

Another paradigm of artificial materials is the parity-time (PT) 
symmetric systems that belong to a class of synthetic materials that do 
not obey separately the parity (P) and time (T) symmetries but instead 
they do exhibit a combined PT-symmetry. The ideas and notions of PT-
symmetric systems have originated from the extension of ordinary 
Quantum Mechanics to PT-symmetric (i.e., non-Hermitian) Hamiltonians 
that have been studied for many years. Recently it has been realized 
that many classical systems are PT-symmetric \cite{Bender}. 
Subsequently, the notion of PT-symmetry has been extended to dynamical 
lattices, particularly in optics \cite{El-Ganainy,Makris}. It was 
immediately after that the theory evolved to include nonlinear lattices 
\cite{Dmitriev}. 
Building metamaterials with PT-symmetry, relying on a delicate balance between gain and loss where loss may be matched with an equal amount of gain, may provide a way out from high losses typically present in many cases and result in altogether new functionalities and electromagnetic characteristics \cite{laza13}.

Consider a one-dimensional array of dimers, each comprising two nonlinear split-ring resonators (SRRs): one with loss and the other with an equal amount of gain as sketched in Fig.\ \ref{lazaridisdimernew}. The SRRs are coupled magnetically and/or electrically through dipole-dipole forces \cite{Sydoruk,Hesmer,Sersic,Rosanov} and are regarded as RLC circuits, featuring a resistance $R$, an inductance $L$, and a capacitance $C$. A tunnel (Esaki) diode forms a nonlinear metamaterial element with gain. The diodes exhibit a well-defined negative resistance region in their current-voltage characteristics that has a characteristic ``N'' shape. A bias voltage applied to the diode can move its operation point in the negative resistance region and then the SRR-diode system gains energy from the source. The presence of these elements, in addition to providing gain also introduces nonlinearity in the metamaterial.

The alternating magnetic field induces an electromotive force in each SRR due to Faraday's law which in turn produces currents that couple the SRRs magnetically through their mutual inductance. The coupling strength between SRRs is rather weak due to the nature of their interaction (magnetoinductive), and has been calculated accurately \cite{Sydoruk,Rosanov}. The SRRs may also be coupled electrically through the electric dipoles that develop in their slits. Thus, in the general case one has to consider both magnetic and electric coupling between SRRs. However, for particular relative orientations of the SRR slits the magnetic interaction is dominant, while the electric interaction can be neglected in a first approximation \cite{Hesmer,Sersic,Feth}. 
Here, we study the case where the electric interaction, that depends on the distance of the gaps, is not neglected, see Fig.\ref{lazaridisdimernew}. In our case we take the gaps looking at the same direction. 

\begin{figure}[tbp]
\begin{center}
{\includegraphics[width=0.45\textwidth]{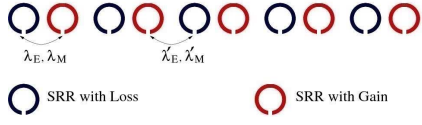}}
\end{center}
\caption{Schematic of a PT-symmetric metamaterial. The separation between SRRs can be modulated according to a binary pattern (i.e., PT dimer chain). The parameters $\lambda _{M}, \lambda' _{M}$ and $\lambda_{E}, \lambda'_{E}$ denote the magnetic (subscript M) and electric (subscript E) interactions between the rings. The sketch is adapted from \cite{laza13}. }\label{lazaridisdimernew}
\end{figure}

In the equivalent circuit model picture (see, e.g., \cite{Rosanov,Lazarides2,Lazarides3,Molina}), extended for the PT dimer chain, the dynamics of the charge $q_{n}$ in the capacitor of the $n$-th SRR is governed by
\begin{equation}\label{eq1}
\begin{gathered}
\ddot{q}_{2n+1}+q_{2n+1}+\lambda_{M} ' \ddot{q}_{2n}+\lambda_{M} \ddot{q}_{2n+2}+
\lambda_{E}'q_{2n}+\lambda_{E} q_{2n+2}\\
=\varepsilon \sin (pn+\Omega t)-a q_{2n+1}
^{2} -\beta q_{2n+1} ^{3}-\gamma\dot{q}_{2n+1},\\
\ddot{q}_{2n+2}+q_{2n+2}+\lambda_{M}  \ddot{q}_{2n+1}+\lambda_{M} ' \ddot{q}_{2n+3}+
\lambda_{E}q_{2n+1}+\lambda_{E}' q_{2n+3}\\
=\varepsilon \sin (pn+\Omega t)-a q_{2n+2}
^{2} -\beta q_{2n+2} ^{3} +\gamma\dot{q}_{2n+2},
\end{gathered}
\end{equation}
where $\lambda _{M}, \lambda' _{M}$ and $\lambda_{E}, \lambda'_{E}$ are the magnetic and electric interaction coefficients, respectively, between nearest
neighbours, $a$ and $\beta$ are nonlinear coefficients, $\gamma$ is the gain or loss coefficient ($\gamma>0$), $\varepsilon$ is the amplitude of the external driving voltage, while $\Omega$ is the driving frequency normalized to $\omega_{0}=1/ \sqrt{LC_{0}}$ and $t$ temporal variable normalized to $\omega^{-1}_{0}$, with $C_{0}$ being the linear capacitance.

Replacing $q_{2n+1}=U_{n}$ and $q_{2n+2}=V_{n}$ into \eqref{eq1}, we obtain
\begin{equation}\label{eq5}
\begin{gathered}
\ddot{U}_{n}+U_{n}+a U_{n} ^{2} +\beta U_{n} ^{3}\\
=-\lambda _{M}' \ddot{V}_{n-1}-\lambda_{M} \ddot{V}_{n}-\lambda_{E} ' 
V_{n-1} -\lambda_{E}V_{n}-\gamma \dot{U}_{n} +\varepsilon \sin (pn+
\Omega t),\\
\ddot{V}_{n}+V_{n}+a V_{n} ^{2} +\beta V_{n} ^{3}\\
=-\lambda _{M} \ddot{U}_{n}-\lambda_{M}' \ddot{U}_{n+1}-\lambda_{E}  
U_{n} -\lambda_{E} ' U_{n+1}+\gamma \dot{V}_{n} +\varepsilon \sin (pn+
\Omega t).
\end{gathered}
\end{equation}
Looking for travelling waves and hence setting $z=pn+\Omega t$, $U(z)=U_n(t)$, $V(z)=V_n(t)$, \eqref{eq5} becomes
\begin{equation}\label{eq7}
\begin{gathered}
\Omega^{2} U_{zz} (z) +U(z)+a U^{2} (z) +\beta U^{3} (z)\\
=-\lambda _{M} ' \Omega^{2} V_{zz} (z-p)-\lambda_{M} \Omega ^{2} V_{zz} (z) -\lambda_{E}' V(z-p)-\lambda_{E} V(z) -\gamma\Omega U_{z} (z) +\varepsilon \sin z,\\
\Omega^{2} V_{zz} (z) +V(z)+a V^{2} (z) +\beta V^{3} (z)\\
=-\lambda _{M} \Omega^{2} U_{zz} (z)-\lambda_{M} '  \Omega ^{2} U_{zz} (z+p) -\lambda_{E} U(z)-\lambda_{E} ' U(z+p) +\gamma\Omega V_{z} (z) +\varepsilon \sin z.
\end{gathered}
\end{equation}
Our work is concerned with the existence and persistence of localized travelling waves of \eqref{eq7}. This paper extends our previous studies \cite{dibl14,feck17}, where the persistence of localized travelling waves in a single equation was considered. In the context of PT-symmetric metamaterials, our work extends that of \cite{laza13} that demonstrated numerically the existence of standing, highly-localized solutions.

This paper is organized as follows. In Section \ref{sec2} we introduce a Melnikov analysis related to homoclinic motion in the system and study two different cases for different values of our nonlinear coefficients $a$ and $\beta$. In Section \ref{sec3}, we present numerical computations comparing the analytical results in the preceding section. Finally, Section \ref{sec4} summarizes our findings.

\section{Melnikov Analysis}
\label{sec2}

In this section we perform a Melnikov analysis for homoclinic type motions of \eqref{eq7} for weak coupling, forcing and damping, which is expressed after scaling $\lambda _{M} \to\varepsilon\lambda _{M} $, $\lambda _{M} '\to\varepsilon\lambda _{M} '$, $\lambda _{E} \to\varepsilon\lambda _{E} $, $\lambda _{E} '\to\varepsilon\lambda _{E} '$, and $z\to z/\Omega$ as
\begin{equation}\label{eq10a}
\begin{gathered}
U_{zz} (z) +U(z)+a U^{2} (z) +\beta U^{3} (z)\\
=\varepsilon(-\lambda _{M} 'V_{zz} (z-p)-\lambda_{M}V_{zz} (z) -\lambda_{E}' V(z-p)-\lambda_{E} V(z) -\gamma U_{z} (z) +\sin \Omega z),\\
V_{zz} (z) +V(z)+a V^{2} (z) +\beta V^{3} (z)\\
=\varepsilon(-\lambda _{M}U_{zz} (z)-\lambda_{M} 'U_{zz} (z+p) -\lambda_{E} U(z)-\lambda_{E} ' U(z+p) +\gamma V_{z} (z) +\sin \Omega z),
\end{gathered}
\end{equation}
where the parameter $|\varepsilon| \ll 1$ indicates the pertubative character in the above equations. For $\varepsilon=0$ the unperturbed system is
\begin{equation}\label{eq9}
\begin{gathered}
U_{zz} (z) +U(z)+a U^{2} (z) +\beta U^{3} (z) =0,\\
V_{zz} (z) +V(z)+a V^{2} (z) +\beta V^{3} (z) =0.
\end{gathered}
\end{equation}
We consider the following cases: a) $a<0$ and $\beta=0$, and b) $a=0$ and $\beta=-1$, which represent systems with non-topological and topological localized waves.

\subsection{Case (a)}

Both equations of \eqref{eq9} have a hyperbolic equilibrium $(\overline{p}_{i},\overline{q}_{i})=\left(0,-\frac{1}{a}\right)$, $i=1,2$ where $i=1$ and $i=2$ refer to the first and second equation of \eqref{eq9}, $q_1=U$, $q_2=V$, connected by a homoclinic solution
$$p_i=\dot{q_{i}},\quad q_{i}(z)=-\frac{1}{a}+\frac{3\sech ^{2}\frac{z}{2}}{2a}.$$
So in the full space $\R^4$, the system \eqref{eq9} has a hyperbolic equilibrium
$$(\overline{p}_{1},\overline{q}_{1},\overline{p}_{2},\overline{q}_{2})$$
connected by the homoclinic trajectory
\begin{equation}\label{homoclinic trajectories}
\begin{gathered}
(p_{1h} (z),q_{1h} (z),p_{2h} (z),q_{2h} (z))\\
=\left(-\frac{3\sech^{2}\frac{z}{2}\tanh\frac{z}{2}}{2a},-\frac{1}{a}+\frac{3\sech^{2} \frac{z}{2}}{2a},-\frac{3\sech^{2}\frac{z}{2}\tanh\frac{z}{2}}{2a},-\frac{1}{a}+\frac{3\sech^{2}\frac{z}{2}}{2a}\right).
\end{gathered}
\end{equation}
To study the persistence of homoclinic type solutions for \eqref{eq10a}, we compute the Melnikov integrals (for higher dimensions Melnikov analysis see \cite{Wiggins1}, \cite{Wiggins2}, \cite{Yagasaki1} and \cite{Yagasaki2}). After
introducing real parameters $\tau_{1}, \tau_{2}$ determining the position on the homoclinic type orbits, we derive
$$
\begin{gathered}
M_{1} (\tau _{1},\tau_{2}):=\int_{-\infty}^{\infty} p_{1h} (z-\tau _{1})[-\lambda_{M}' \dot{p}_{2h} (z-\tau_{2} -p)-\lambda_{M} \dot{p}_{2h} (z-\tau_{2})\\
-\lambda_{E}' q_{2h} (z-\tau_{2}-p)-\lambda_{E} q_{2h} (z-\tau_{2})-\gamma p_{1h} (z-\tau_{1})+\sin \Omega z]dz,\\
M_{2} (\tau _{1},\tau_{2}):=\int_{-\infty}^{\infty} p_{2h} (z-\tau_{2}) [-\lambda_{M} \dot{p}_{1h} (z-\tau_{1})-\lambda_{M}'\dot{p}_{1h} (z-\tau_{1}+p)\\
-\lambda_{E}q_{1h} (z-\tau_{1})-\lambda_{E}' q_{1h} (z-\tau_{1}+p)+\gamma p_{2h} (z-\tau_{2})+\sin \Omega z]dz.
\end{gathered}$$
We see that the above Melnikov functions have the forms
\begin{equation}
\begin{gathered}
M_{1}(\tau _{1},\tau_{2})=M_{11}(\tau _{1}-\tau_{2})+M_{12}\cos\Omega \tau_1,\\
M_{2}(\tau _{1},\tau_{2})=M_{21}(\tau _{1}-\tau_{2})+M_{22}\cos\Omega \tau_2,
\end{gathered}
\label{ins1}
\end{equation}
where $M_{i1}$ are analytic functions with $\lim_{\eta\to\pm\infty}M_{i1}(\eta)=M_{i1\pm}$ exist and $M_{i2}$ are nonzero constants (see Appendices A and B). Now we pass to $\eta=\tau_1-\tau_2$ and $\tau_2$, so we are looking for a simple zero of
\begin{equation}\label{e1}
\begin{gathered}
0=M_{11}(\eta)+M_{12}\cos\Omega(\eta+\tau_2)=M_{11}(\eta)+M_{12}(\cos\Omega\eta\cos\Omega\tau_2-\sin\Omega\eta\sin\Omega\tau_2),\\
M_{21}(\eta)+M_{22}\cos\Omega \tau_2=0.
\end{gathered}
\end{equation}
For $\sin\Omega\eta\ne0$, solving \eqref{e1}, we obtain
\begin{equation}\label{e2}
\begin{gathered}
\cos\Omega \tau_2=-\frac{M_{21}(\eta)}{M_{22}},\\
\sin\Omega\tau_2=\frac{M_{11}(\eta)M_{22}-M_{12}M_{21}(\eta)\cos\Omega\eta}{M_{12}M_{22}\sin\Omega\eta}.\end{gathered}
\end{equation}
By $\cos^2\Omega \tau_2+\sin^2\Omega \tau_2=1$, this yields
\begin{equation}\label{e3}
\begin{gathered}
M(\eta):=M_{21}^2(\eta)M_{12}^2\sin^2\Omega\eta+\left(M_{11}(\eta)M_{22}-M_{12}M_{21}(\eta)\cos\Omega\eta\right)^2\\-M_{12}^2M_{22}^2\sin^2\Omega\eta=0.
\end{gathered}
\end{equation}
For $\sin\Omega\eta=0$, i.e., $\eta_k=\frac{\pi k}{\Omega}$ for some $k\in \mathbb{Z}$, solving \eqref{e1}, we obtain
\begin{equation}\label{e2b}
\cos\Omega \tau_2=-\frac{M_{21}\left(\frac{\pi k}{\Omega}\right)}{M_{22}}
\end{equation}
and
\begin{equation}\label{e2c}
M_{11}\left(\frac{\pi k}{\Omega}\right)M_{22}=(-1)^kM_{12}M_{21}\left(\frac{\pi k}{\Omega}\right).
\end{equation}
\begin{lemma}
A simple root $\eta$ of \eqref{e3} with $\eta\ne\frac{\pi k}{\Omega}$ for any $k\in \mathbb{Z}$ gives via \eqref{e2} a simple zero of \eqref{e1}.
\end{lemma}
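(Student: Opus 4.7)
The plan is to interpret ``simple zero of \eqref{e1}'' as meaning that the Jacobian determinant of the vector-valued map $F=(F_1,F_2)\colon(\eta,\tau_2)\mapsto\mathbb{R}^2$ given by the left-hand sides of \eqref{e1} does not vanish at the candidate point $(\eta^*,\tau_2^*)$, where $\tau_2^*$ is recovered from \eqref{e2} using the supplied $\eta^*$. The task then reduces to showing that the hypothesis $M'(\eta^*)\neq 0$ together with $\sin\Omega\eta^*\neq 0$ forces $\det DF(\eta^*,\tau_2^*)\neq 0$.

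First I would write out the $2\times 2$ matrix $DF(\eta,\tau_2)$ explicitly by differentiating the two components of \eqref{e1}; the entries are $\partial_\eta F_1 = M_{11}'(\eta)-M_{12}\Omega\sin\Omega(\eta+\tau_2)$, $\partial_{\tau_2}F_1 = -M_{12}\Omega\sin\Omega(\eta+\tau_2)$, $\partial_\eta F_2 = M_{21}'(\eta)$, $\partial_{\tau_2}F_2 = -M_{22}\Omega\sin\Omega\tau_2$. Next, I would differentiate the definition \eqref{e3} directly, obtaining $M'(\eta^*)$ as a linear combination of $M_{11}'(\eta^*),M_{21}'(\eta^*)$ and trigonometric quantities in $\eta^*$. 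At this stage I would substitute the specific values of $M_{21}(\eta^*)$ and of the combination $M_{11}(\eta^*)M_{22}-M_{12}M_{21}(\eta^*)\cos\Omega\eta^*$ coming from \eqref{e2}, i.e., replace them by $-M_{22}\cos\Omega\tau_2^*$ and $M_{12}M_{22}\sin\Omega\eta^*\sin\Omega\tau_2^*$ respectively.

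The heart of the argument is to verify an identity of the form $M'(\eta^*) = c\cdot\det DF(\eta^*,\tau_2^*)$, with $c$ a nonzero multiple of $M_{12}M_{22}\sin\Omega\eta^*/\Omega$, after applying the angle-addition formulas to collapse $\sin\Omega\eta\cos\Omega\tau_2^*+\cos\Omega\eta\sin\Omega\tau_2^* = \sin\Omega(\eta+\tau_2^*)$ and analogously for the cosine. Because $M_{12}$ and $M_{22}$ are nonzero constants (stated right after \eqref{ins1}) and $\sin\Omega\eta^*\neq 0$ under the standing hypothesis $\eta^*\neq\pi k/\Omega$, the factor $c$ does not vanish, so $M'(\eta^*)\neq 0$ is equivalent to $\det DF(\eta^*,\tau_2^*)\neq 0$, which is exactly the required simplicity.

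The main obstacle is purely bookkeeping: the algebraic reduction of $M'(\eta^*)$ to a scalar multiple of $\det DF(\eta^*,\tau_2^*)$ involves several cancellations that only surface after the substitutions from \eqref{e2} and the angle-addition identities are carried out in the correct order. The identity is not accidental, however, since $M$ is constructed from $F$ by algebraic elimination of $\tau_2$ (solving $F_2=0$ for $\cos\Omega\tau_2$, $F_1=0$ for $\sin\Omega\tau_2$, and then enforcing the Pythagorean identity), so at a common zero the derivative of the resultant must record the Jacobian of the original pair up to a nonvanishing multiplicative factor coming from the elimination denominator $M_{12}M_{22}\sin\Omega\eta$. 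This conceptual reason both motivates the identity and explains why the hypothesis $\sin\Omega\eta^*\neq 0$ is indispensable.
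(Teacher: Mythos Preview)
Your proposal is correct and follows the same overall strategy as the paper: relate $M'(\eta^*)$ to the Jacobian determinant at the candidate zero and conclude from the nonvanishing of the factor $M_{12}M_{22}\sin\Omega\eta^*$. The execution differs, however. You work directly with the system \eqref{e1} and carry out the bookkeeping on $M'$ and $\det DF$ by brute substitution of \eqref{e2} and the angle-addition formulas, arriving at $M'(\eta^*)=-\dfrac{2M_{12}M_{22}\sin\Omega\eta^*}{\Omega}\det DF(\eta^*,\tau_2^*)$. The paper instead first passes to the equivalent system $\cos\Omega\tau_2=A(\eta)$, $\sin\Omega\tau_2=B(\eta)$ obtained from \eqref{e2}, observes that $M(\eta)=(A^2(\eta)+B^2(\eta)-1)M_3^2(\eta)$ with $M_3=M_{12}M_{22}\sin\Omega\eta$, and then notes that at a zero (where $A=\cos\Omega\tau_2$, $B=\sin\Omega\tau_2$) the Jacobian of that system equals $\Omega(A'A+B'B)=\tfrac{\Omega}{2}(A^2+B^2)'=\tfrac{\Omega}{2}(M/M_3^2)'=\Omega M'/(2M_3^2)$. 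This packaging makes the identity essentially a one-line computation and bypasses the explicit trigonometric cancellations you anticipate; on the other hand, your route has the advantage of dealing with \eqref{e1} itself rather than relying on the preliminary (and in the paper only asserted) claim that simplicity is preserved when passing from \eqref{e1} to \eqref{e2}.
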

\begin{proof}
Clearly simple zeroes of \eqref{e2} are equivalent to simple zeroes of \eqref{e1} when $\sin\Omega\eta\ne0$. Next, we can write \eqref{e2} as
\begin{equation}\label{e2ab}
\begin{gathered}
\cos\Omega \tau_2-A(\eta)=0,\\
\sin\Omega\tau_2-B(\eta)=0\end{gathered}
\end{equation}
for $A(\eta)=\frac{\widetilde{M}_{1}(\eta)}{M_{3}(\eta)}$, $B(\eta)=\frac{\widetilde{M}_{2}(\eta)}{M_{3}(\eta)}$, $M_3(\eta)=M_{12}M_{22}\sin\Omega\eta$ and others defined by definition. Then, by \eqref{e3}, $M(\eta)=\left(A^2(\eta)+B^2(\eta)-1\right)M_3^2(\eta)$. The Jacobian of \eqref{e2ab} at its zero $\eta_0$ is
$$
\begin{gathered}
\Omega\left(A(\eta_0)'\cos\Omega\tau_2+B(\eta_0)'\sin\Omega\tau_2\right)=\Omega\left(A(\eta_0)'A(\eta_0)+B(\eta_0)'B(\eta_0)\right)\\
=\frac{\Omega}{2}\left(A^2(\eta_0)+B^2(\eta_0)\right)'=\frac{\Omega}{2}\left(\frac{M(\eta_0)}{M_3^2(\eta_0)}\right)'=\frac{\Omega M(\eta_0)'}{2M_3^2(\eta_0)},
\end{gathered}
$$
since $M(\eta_0)=0$. Hence any zero of \eqref{e2} is simple if and only if it is generated by a simple root of \eqref{e3}. The proof is finished.
\end{proof}

Next, \eqref{e3} is asymptotically near at $\pm\infty$ to
\begin{equation}\label{e4}
\begin{gathered}
M_\pm(\eta):=M_{21\pm}^2M_{12}^2\sin^2\Omega\eta+\left(M_{11\pm}M_{22}-M_{12}M_{21\pm}\cos\Omega\eta\right)^2-M_{12}^2M_{22}^2\sin^2\Omega\eta\\
=M_{21\pm}^2M_{12}^2+M_{11\pm}^2M_{22}^2-M_{12}^2M_{22}^2-2M_{11\pm}M_{22}M_{12}M_{21\pm}\cos\Omega\eta\\
+M_{12}^2M_{22}^2\cos^2\Omega\eta=0,
\end{gathered}
\end{equation}
where 
$$
\begin{gathered}
M_{11\pm}=-\int_{-\infty}^{\infty} \frac{9\gamma}{4a^{2}} \sech^{4}\frac{z}{2}\tanh^{2}\frac{z}{2}dz=-\frac{6\gamma}{5a^2},\\
M_{21\pm}=\int_{-\infty}^{\infty} \frac{9\gamma}{4a^{2}} \sech^{4}\frac{z}{2}\tanh^{2}\frac{z}{2}dz=\frac{6\gamma}{5a^2}.
\end{gathered}
$$

Equation \eqref{e4} has roots
\begin{equation}\label{e5}
\cos\Omega\eta=\frac{-\widetilde{b}\pm\sqrt{\widetilde{b}^2-4\widetilde{a}\widetilde{c}}}{2\widetilde{a}}
\end{equation}
for
$$
\widetilde{a}=M_{12}^2M_{22}^2,\quad \widetilde{b}=-2M_{11\pm}M_{22}M_{12}M_{21\pm},\quad \widetilde{c}=M_{21\pm}^2M_{12}^2+M_{11\pm}^2M_{22}^2-M_{12}^2M_{22}^2.
$$
When
\begin{equation*}
\left|\frac{-\widetilde{b}\pm\sqrt{\widetilde{b}^2-4\widetilde{a}\widetilde{c}}}{2\widetilde{a}}\right|<1,
\end{equation*}
i.e.,
\begin{equation}\label{e6.0}
\Big|\gamma^2(\cosh^2\Omega\pi-1)\pm\left(\gamma^2(\cosh^2\Omega\pi-1)-25a^2\Omega^4\pi^2\right)\Big|<25a^2\Omega^4\pi^2
\end{equation}
for either sign, then \eqref{e5} and thus also \eqref{e4} has infinitely many simple roots which gives large (in absolute value) simple roots of \eqref{e3} with $|\cos\Omega\eta|<1$, i.e. $\sin\Omega\eta\ne0$. For such large $\eta$ via \eqref{e2}, we obtain a simple zero of \eqref{e1}.
Looking at \eqref{e6.0}, one can see that the inequality is never satisfied for the minus sign. Thus it is equivalent to
\begin{equation}\label{e6}
	\gamma\,|\sinh\Omega\pi|<-5a\Omega^2\pi.
\end{equation}
Summarizing we obtain the following result.

\begin{theorem} Let $a<0$ and $\beta=0$. Condition \eqref{e6} is sufficient for the persistence of a homoclinic type solution in \eqref{eq10a} for $\varepsilon\ne0$ small.
\end{theorem}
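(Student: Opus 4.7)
The plan is to combine the standard higher-dimensional Melnikov theorem with the lemma that was just established. The references \cite{Wiggins1,Wiggins2,Yagasaki1,Yagasaki2} guarantee that any simple zero $(\tau_1^*,\tau_2^*)$ of the Melnikov vector $(M_1,M_2)$ persists, for $\varepsilon\ne 0$ small, as a transverse homoclinic intersection of the stable and unstable manifolds of the periodic orbit that bifurcates from the equilibrium $(\overline p_1,\overline q_1,\overline p_2,\overline q_2)$ of \eqref{eq9}; this intersection is precisely the localized travelling wave of \eqref{eq10a} that is claimed to persist. Hence the whole task reduces to exhibiting such a simple zero under the hypothesis \eqref{e6}.

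First I would work in the coordinates $(\eta,\tau_2)$ with $\eta=\tau_1-\tau_2$ and invoke the lemma: it suffices to find a simple root $\eta^*$ of \eqref{e3} with $\sin\Omega\eta^*\ne 0$, from which $\tau_2^*$ is recovered through \eqref{e2}. At such a root the identity $A^2(\eta^*)+B^2(\eta^*)=1$ holds, so $\tau_2^*$ is well defined modulo $2\pi/\Omega$, and the lemma transfers the simplicity back to the pair $(\eta^*,\tau_2^*)$, giving a simple zero of \eqref{e1} as required.

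Next I would produce the roots asymptotically. Using $M_{11\pm}=-6\gamma/(5a^2)$ and $M_{21\pm}=6\gamma/(5a^2)$ together with the closed forms of $M_{12}$ and $M_{22}$ from Appendices A and B, a direct substitution shows that the quadratic \eqref{e5} in $\cos\Omega\eta$ has a real root strictly inside $(-1,1)$ exactly when the plus-sign version of \eqref{e6.0} holds, which is algebraically equivalent to \eqref{e6} (the minus sign being infeasible as already noted in the text). Under \eqref{e6}, therefore, $M_\pm(\eta)=0$ admits infinitely many simple roots $\{\eta_k^\pm\}$ accumulating at $\pm\infty$, all satisfying $\sin\Omega\eta_k^\pm\ne 0$ and with derivatives $M_\pm'(\eta_k^\pm)$ uniformly bounded away from zero.

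The main obstacle, and the final step, is to lift these roots from the asymptotic profile $M_\pm$ to the full function $M$. Since $p_{jh}(z)$ and $q_{jh}(z)-\overline q_j$ decay exponentially in $z$, the differences $M_{i1}(\eta)-M_{i1\pm}$ together with their derivatives vanish exponentially as $\eta\to\pm\infty$, so $M(\eta)-M_\pm(\eta)$ and its derivative become arbitrarily small on a neighborhood of each $\eta_k^\pm$ for $|k|$ large. Because $M_\pm'(\eta_k^\pm)$ is uniformly bounded away from zero, the implicit function theorem applied at each such $\eta_k^\pm$ yields a genuine simple root of $M(\eta)=0$ nearby, still with $\sin\Omega\eta\ne 0$. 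The lemma then converts this simple root into a simple zero of \eqref{e1}, and the Melnikov persistence theorem quoted above delivers the claimed localized travelling wave, concluding the proof.
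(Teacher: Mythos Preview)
Your proposal is correct and follows essentially the same route as the paper: reduce to a simple zero of the Melnikov vector via the lemma, pass to the asymptotic profile $M_\pm$, solve the quadratic \eqref{e5} in $\cos\Omega\eta$, and observe that condition \eqref{e6} (the plus sign in \eqref{e6.0}, the minus sign being infeasible) guarantees infinitely many simple roots with $\sin\Omega\eta\ne0$, which transfer to $M$ for $|\eta|$ large. In fact you supply details the paper leaves implicit---the exponential decay of $M_{i1}(\eta)-M_{i1\pm}$, the uniform nondegeneracy of $M_\pm'$ at its (periodic) roots, and the implicit-function-theorem step lifting roots of $M_\pm$ to roots of $M$---so your argument is a faithful and slightly more explicit version of the paper's.
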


Finally we do not study the case $\sin\Omega\eta=0$ in more details, since then \eqref{e2c} must hold which is rather strong restriction on parameters occurring in \eqref{eq10a}.

\begin{figure}[tbp]
\begin{center}
\subfigure[]{\includegraphics[width=0.45\textwidth]{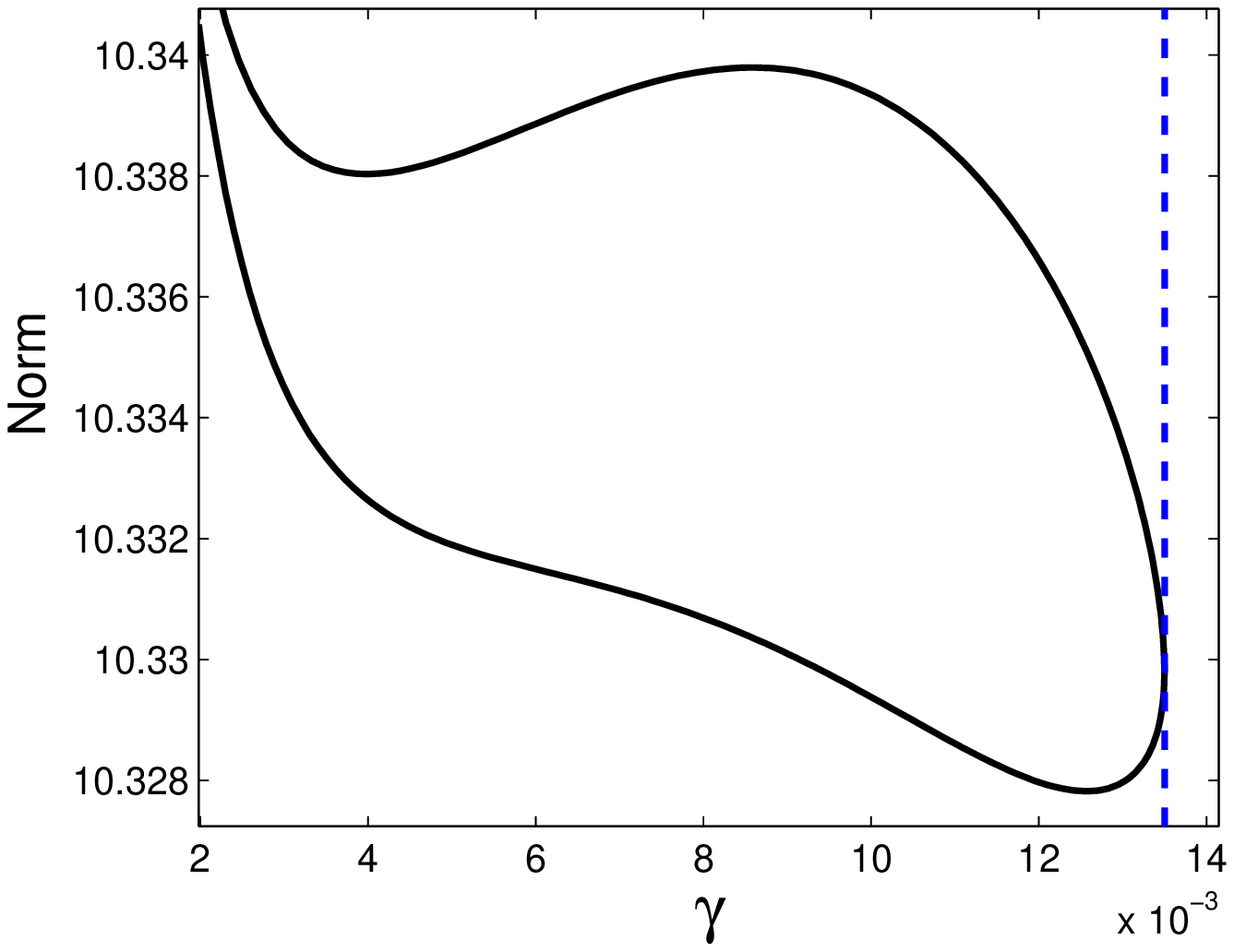}}
\subfigure[]{\includegraphics[width=0.45\textwidth]{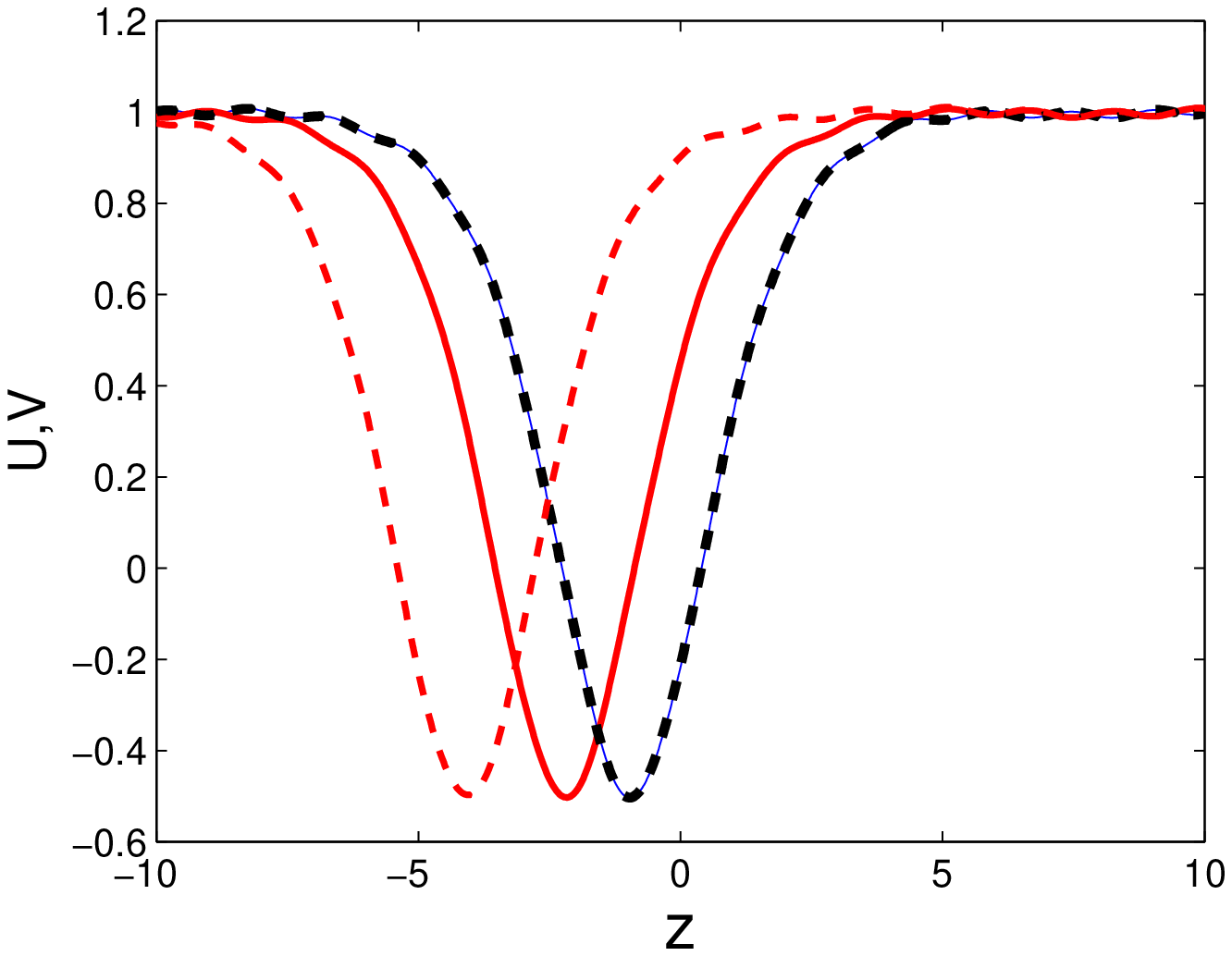}}\\
\subfigure[]{\includegraphics[width=0.45\textwidth]{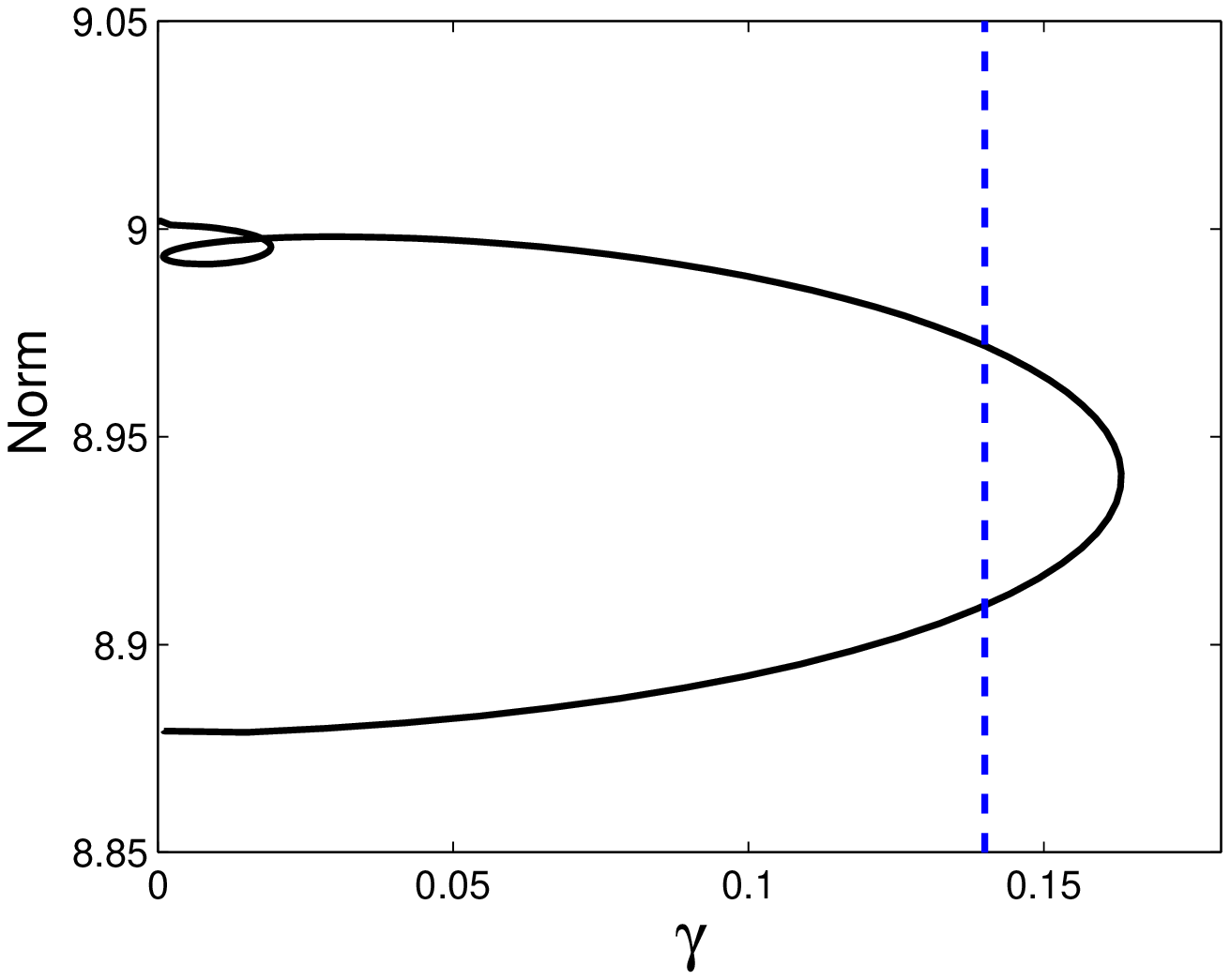}}
\subfigure[]{\includegraphics[width=0.45\textwidth]{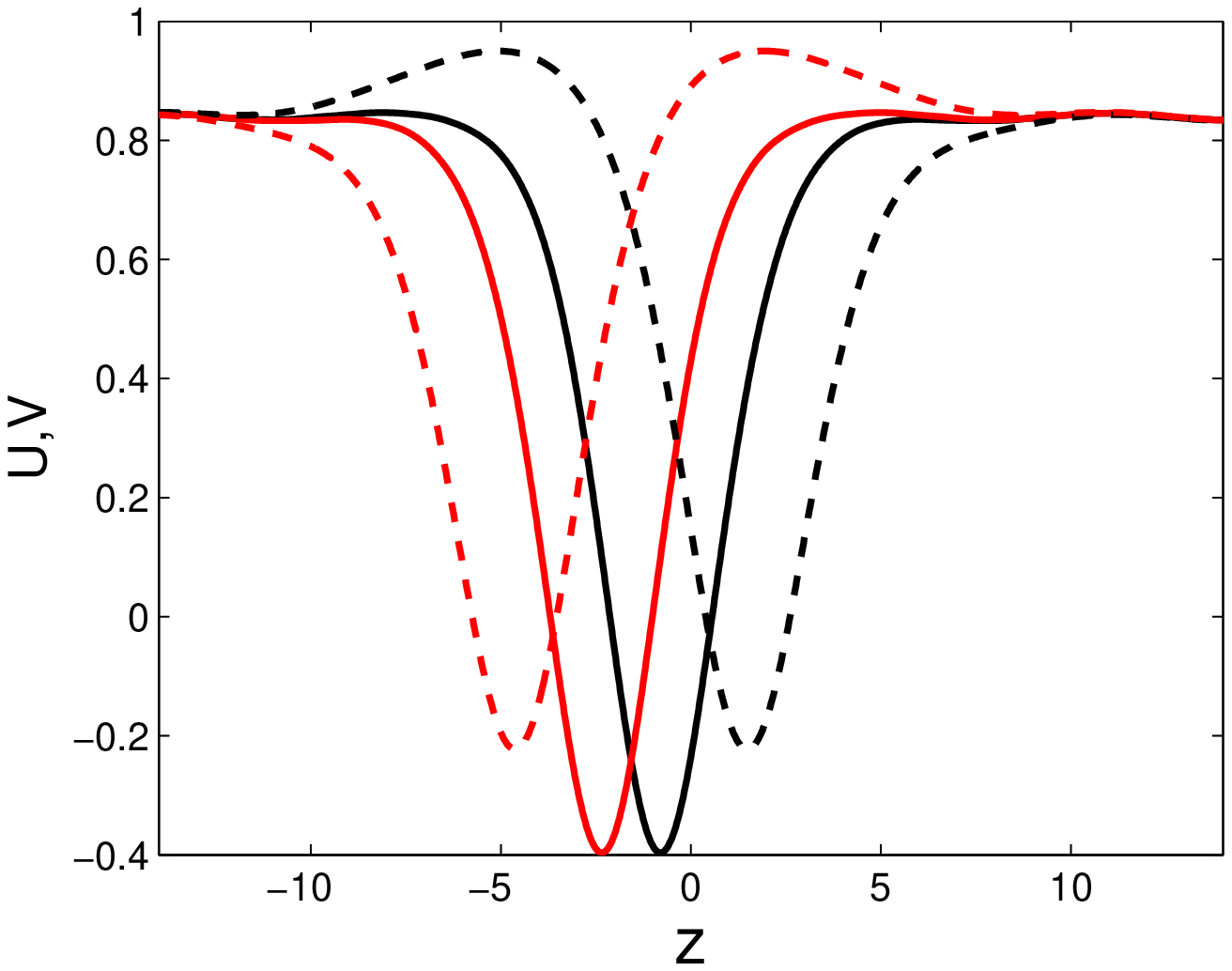}}
\end{center}
\caption{(a) Bifurcation diagram of the localized solution for $a=-1$, $\beta=0$, $\lambda_M'=\lambda_M=\lambda_E'=\lambda_E=0$ and $\varepsilon=0.01.$ The vertical axis is the solution norm and the horizontal axis is the gain-loss coefficient $\gamma$. Panel (b) shows the solutions on the lower (solid line) and upper (dashed line) branches at $\gamma=0.008$. The right and left dips correspond to $U$ and $V$, respectively. Panels (c,d) are the same as (a,b), but for $\lambda'_{M}=-0.020$, $\lambda'_{E}=-0.040$, $\lambda_{M}=-0.040$, $\lambda_{E}=-0.120$. Panel (d) shows the solutions at $\gamma=0.1.$ The vertical dashed lines are the existence boundaries given by the Melnikov function.}\label{pic1}
\end{figure}

\subsection{Case (b)}
The equations of \eqref{eq9} possess heteroclinic solutions
$$p_i=\dot q_i,\quad q_{i}(z)=\tanh\frac{z}{\sqrt{2}}.$$
So in the full space $\R^4$, the system \eqref{eq9} has hyperbolic equilibria
$$(\overline{p}_{1}^\pm,\overline{q}_{1}^\pm,\overline{p}_{2}^\pm,\overline{q}_{2}^\pm)=(0,\pm1,0,\pm1)$$
connected by the heteroclinic trajectory
\begin{equation}\label{heteroclinc trajectories}
(p_{1h} (z),q_{1h} (z),p_{2h} (z),q_{2h} (z))
=\left(\frac{\sech^{2} \frac{z}{\sqrt{2}}}{\sqrt{2}},\tanh\frac{z}{\sqrt{2}},\frac{\sech^{2} \frac{z}{\sqrt{2}}}{\sqrt{2}},\tanh\frac{z}{\sqrt{2}}\right).
\end{equation}
The Melnikov integrals are
\begin{equation}
\begin{gathered}
M_{1}(\tau _{1},\tau_{2})=N_{11}(\tau _{2}-\tau_{1})+N_{12}\sin\Omega \tau_1,\\
M_{2}(\tau _{1},\tau_{2})=N_{21}(\tau _{2}-\tau_{1})+N_{22}\sin\Omega \tau_2,
\end{gathered}
\label{ins2}
\end{equation}
where $N_{i1}$ are analytic functions with $\lim_{\eta\to\pm\infty}N_{i1}(\eta)=N_{i1\pm}$ existing and $N_{i2}$ are nonzero constants (see Appendix B). Now taking $\tau_i=\frac{\pi}{2\Omega}-\xi_i$, $i=1,2$, we derive
$$
\begin{gathered}
N_{1}(\xi_{1},\xi_{2})=M_{1}\left(\frac{\pi}{2\Omega}-\xi_{1},\frac{\pi}{2\Omega}-\xi_{2}\right)=N_{11}(\xi_{1}-\xi_{2})+N_{12}\cos\Omega \xi_1,\\
N_{2}(\xi_{1},\xi_{2})=M_{2}\left(\frac{\pi}{2\Omega}-\xi_{1},\frac{\pi}{2\Omega}-\xi_{2}\right)=N_{21}(\xi_{1}-\xi_{2})+N_{22}\cos\Omega \xi_2,
\end{gathered}
$$
so we can directly apply arguments of case (a) to derive the following result. Of course, instead of \eqref{e3}, now we have the function
\begin{equation}\label{e3b}
\begin{gathered}
N(\eta)=N_{21}^2(\eta)N_{12}^2\sin^2\Omega\eta+\left(N_{11}(\eta)N_{22}-N_{12}N_{21}(\eta)\cos\Omega\eta\right)^2\\
-N_{12}^2N_{22}^2\sin^2\Omega\eta=0.
\end{gathered}
\end{equation}
Analogously, we derive the condition
\begin{equation}\label{e6b.0}
\left|\frac{-\bar{b}\pm\sqrt{\bar{b}^2-4\bar{a}\bar{c}}}{2\bar{a}}\right|<1
\end{equation}
with
\begin{equation*}
\bar{a}=N_{12}^2N_{22}^2,\quad \bar{b}=-2N_{11\pm}N_{22}N_{12}N_{21\pm},\quad \bar{c}=N_{21\pm}^2N_{12}^2+N_{11\pm}^2N_{22}^2-N_{12}^2N_{22}^2.
\end{equation*}
Putting $\bar{a}$, $\bar{b}$, $\bar{c}$ into \eqref{e6b.0}, one can see that it is equivalent to
\begin{equation*}
	\left|\pi^2\Omega^2\csch^2\frac{\Omega\pi}{\sqrt{2}}-4\left(\pm(\lambda_E+\lambda_E')-\frac{\sqrt{2}\gamma}{3}\right)^2\right|<\pi^2\Omega^2\csch^2\frac{\Omega\pi}{\sqrt{2}},
\end{equation*}
i.e.,
\begin{equation}\label{e6b}
	0<\left|\pm(\lambda_E+\lambda_E')-\frac{\sqrt{2}\gamma}{3}\right|
	<\frac{\Omega\pi}{\sqrt{2}}\csch\frac{\Omega\pi}{\sqrt{2}}.
\end{equation}
\begin{theorem} Let $a=0$ and $\beta=-1$. Condition \eqref{e6b} for either sign is sufficient for the persistence of a heteroclinic type solution in \eqref{eq10a} for $\varepsilon\ne0$ small.
\end{theorem}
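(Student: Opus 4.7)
The plan is to reduce the theorem to the mechanism already developed for Case (a). After the translation $\tau_i=\frac{\pi}{2\Omega}-\xi_i$, $i=1,2$, the Melnikov system $M_1=M_2=0$ in \eqref{ins2} becomes
\[N_{11}(\xi_1-\xi_2)+N_{12}\cos\Omega\xi_1=0,\qquad N_{21}(\xi_1-\xi_2)+N_{22}\cos\Omega\xi_2=0,\]
which is structurally identical to \eqref{e1} with $M_{ij}$ replaced by $N_{ij}$. Because the constants $N_{i2}$ are nonzero (established in Appendix B), the proof of Lemma 2.1 carries over verbatim: a simple zero $\eta$ of the one-variable function $N(\eta)$ in \eqref{e3b} with $\sin\Omega\eta\ne 0$ produces a simple zero of the shifted Melnikov system, hence of the original one. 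Standard higher-dimensional Melnikov theory (the references \cite{Wiggins1,Wiggins2,Yagasaki1,Yagasaki2}) then delivers persistence of a heteroclinic orbit of \eqref{eq10a} for small $|\varepsilon|\ne 0$.

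It thus suffices to exhibit a simple root of $N$. Following Case (a), I would work with the asymptotic equation $N_\pm(\eta)=0$ at $\eta\to\pm\infty$: writing $\sin^2\Omega\eta=1-\cos^2\Omega\eta$ turns it into a quadratic in $X=\cos\Omega\eta$ with the coefficients $\bar a,\bar b,\bar c$ displayed after \eqref{e6b.0}, and this quadratic admits a real root with $|X|<1$ precisely when \eqref{e6b.0} holds. In that regime $N_\pm$ has infinitely many simple zeros, and because $N_{i1}(\eta)\to N_{i1\pm}$ exponentially fast as $\eta\to\pm\infty$ (the integrands of $N_{i1}$ decay exponentially away from the shift), the implicit function theorem promotes each of these to a simple zero of the full $N(\eta)$ for $|\eta|$ sufficiently large, and there $\sin\Omega\eta\ne 0$ automatically. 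The exceptional points $\sin\Omega\eta=0$ are handled as in Case (a): they impose a non-generic resonance on the parameters and are therefore not pursued.

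The last step is purely algebraic: substitute the explicit values of $N_{i1\pm},N_{i2}$ computed in Appendix B into \eqref{e6b.0} and verify that it collapses to \eqref{e6b}. With $q_i(z)=\tanh\frac{z}{\sqrt{2}}$, the damping integral $\int_{-\infty}^{\infty}\frac{1}{2}\sech^4\frac{z}{\sqrt{2}}\,dz=\frac{2\sqrt{2}}{3}$ accounts for the $\frac{\sqrt{2}\gamma}{3}$ term; the electric couplings $\lambda_E,\lambda_E'$ generate the $\pm(\lambda_E+\lambda_E')$ contribution (with the sign reflecting both the two roots of the quadratic formula and the orientation of the unperturbed heteroclinic); and the oscillatory forcing integrals $N_{i2}$ evaluate by contour integration to expressions proportional to $\csch\frac{\Omega\pi}{\sqrt{2}}$, producing the right-hand side. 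I expect this reduction, and in particular the verification that one branch of the $\pm$ in \eqref{e6b.0} gives a vacuous inequality while the other reproduces \eqref{e6b}, to be the main bookkeeping obstacle; once Appendix B is in hand, the rest of the argument is a verbatim reuse of Case (a).
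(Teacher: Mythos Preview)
Your proposal is correct and follows essentially the same route as the paper: the substitution $\tau_i=\frac{\pi}{2\Omega}-\xi_i$ reduces Case (b) to the framework of Case (a), after which Lemma 2.1 and the asymptotic quadratic analysis apply verbatim, with the algebraic reduction of \eqref{e6b.0} to \eqref{e6b} using the explicit $N_{i1\pm}$, $N_{i2}$ from Appendix B. One clarification on your final remark: the $\pm$ appearing in \eqref{e6b} is not the quadratic-formula sign from \eqref{e6b.0} but rather the sign distinguishing the limits $N_{i1+}$ and $N_{i1-}$ (i.e., $\eta\to+\infty$ versus $\eta\to-\infty$); the quadratic $\pm$ collapses in the algebra, while the asymptotic $\pm$ survives and accounts for the ``either sign'' clause in the theorem.
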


\section{Numerical results}
\label{sec3}

\begin{figure}[tbhp]
\begin{center}
\subfigure[]{\includegraphics[width=0.45\textwidth]{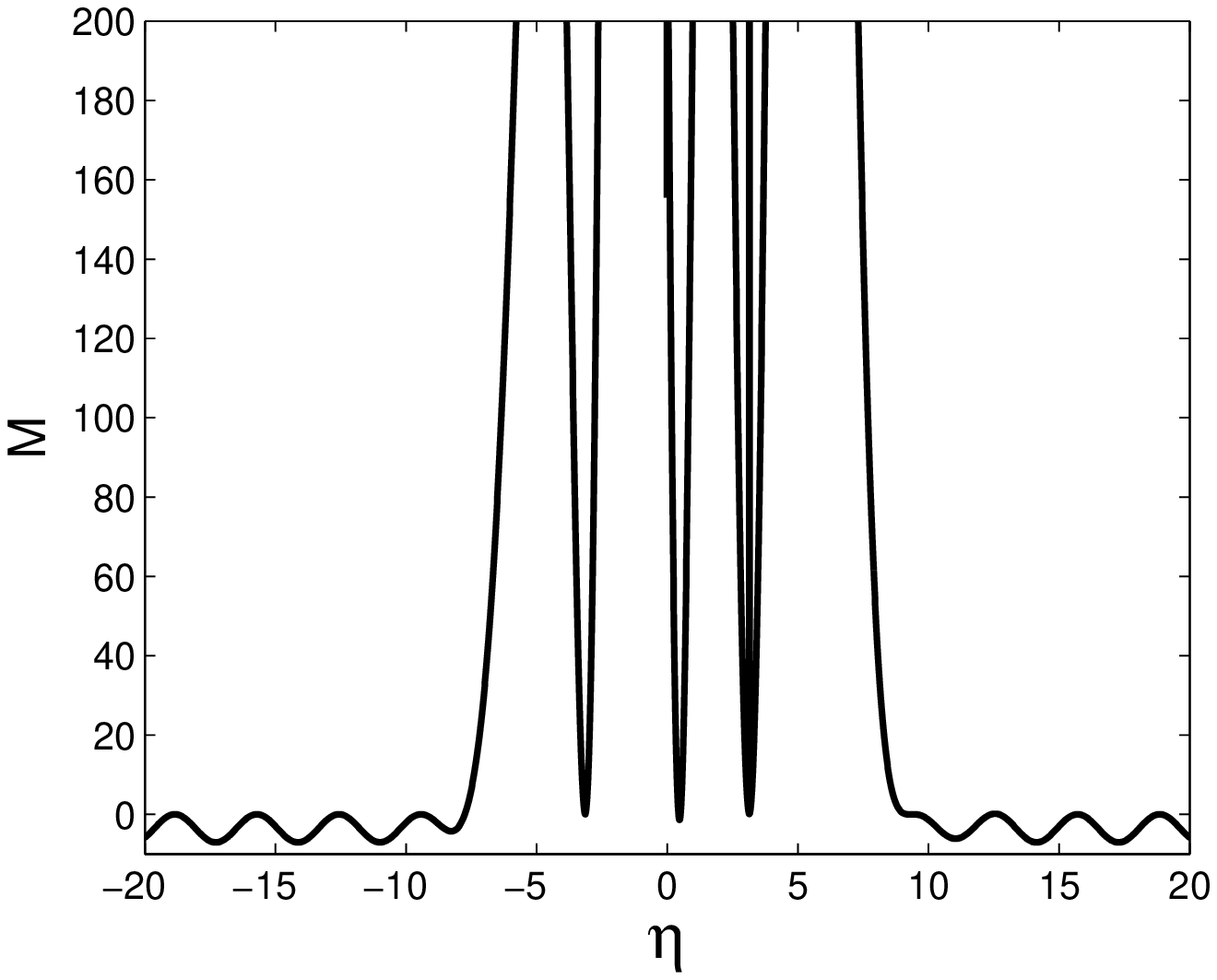}}
\subfigure[]{\includegraphics[width=0.45\textwidth]{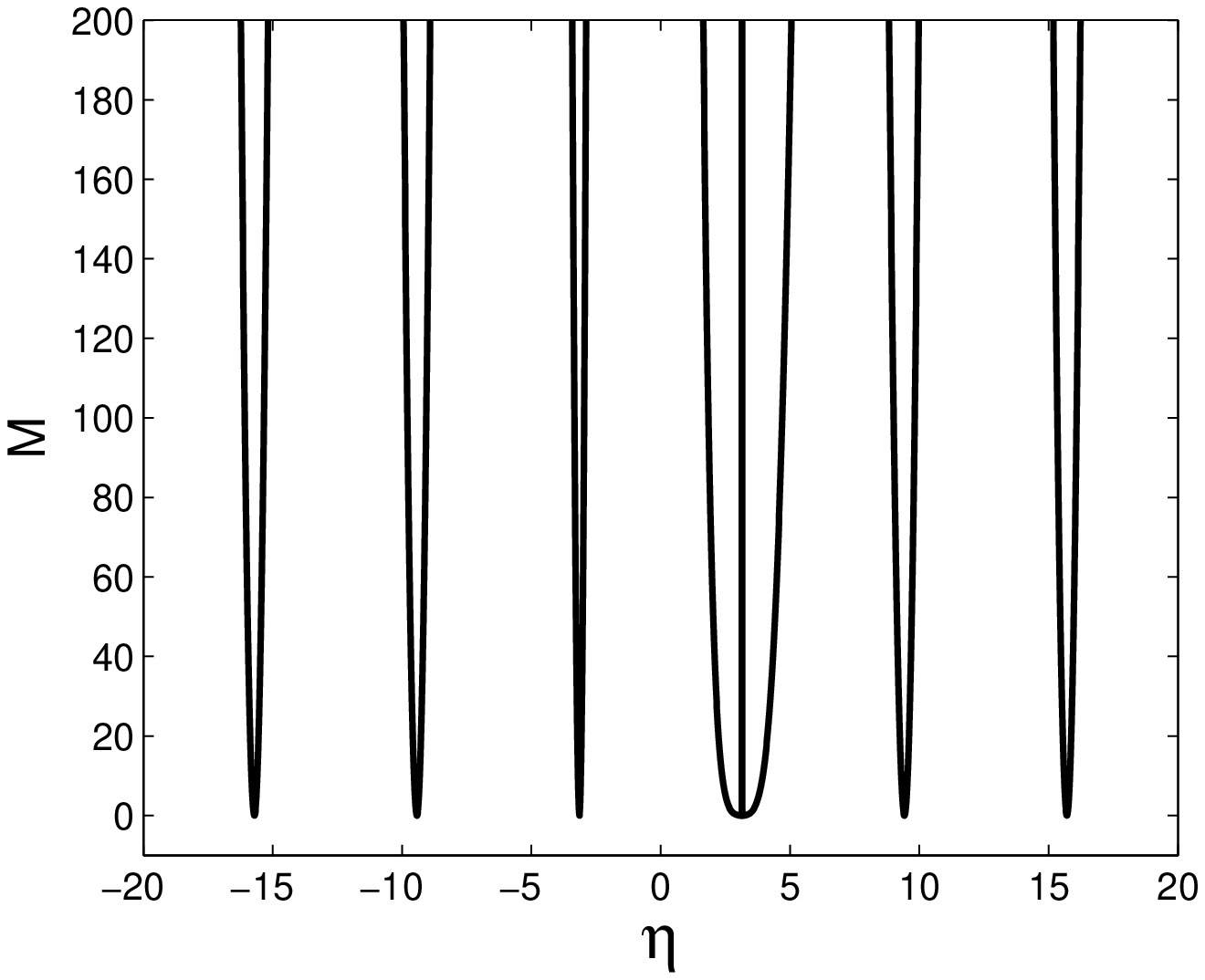}}
\end{center}
\caption{The Melnikov function $M$ of \eqref{e3} with $\lambda'_{M}=-0.020/\varepsilon$, $\lambda'_{E}=-0.040/\varepsilon$, $\lambda_{M}=-0.040/\varepsilon$, $\lambda_{E}=-0.120/\varepsilon$ and $\gamma=0$ (a) or $\gamma=0.14/\varepsilon$ (b). Here, $\varepsilon=0.01.$}\label{pic2}
\end{figure}

To illustrate the theoretical results obtained in the previous sections, we have solved the governing equations \eqref{eq7} numerically. The advance-delay equation \eqref{eq7} is solved using a pseudo-spectral method, i.e., we express the solutions $U$ and $V$ in the trigonometric polynomials
\begin{equation}
\begin{gathered}
U(z)=\sum_{j=1}^{J}\left[A_j\cos\left((j-1)\tilde{k}z\right)+B_j\sin\left(j\tilde{k}z\right)\right],\\
V(z)=\sum_{j=1}^{J}\left[C_j\cos\left((j-1)\tilde{k}z\right)+D_j\sin\left(j\tilde{k}z\right)\right],
\label{ser}
\end{gathered}
\end{equation}
where $\tilde{k}=2\pi/L$ and $-L/2<z<L/2$. Substituting the series into 
\eqref{eq7} and requiring the representations to satisfy the equations 
at $2J$ collocation points, we obtain a system of algebraic equations 
for the Fourier coefficients $A_j$, $B_j$, $C_j$ and $D_j$, $j=1,2,
\dots,J,$ which are then solved using Newton's method. Typically, we 
use $L=30\pi$ and $J=45$. In the presence of coupling, using the 
experimentally relevant parameter values provided in \cite{Rosanov}, in 
the following we take $\lambda'_{M}=-0.020$, $\lambda'_{E}=-0.040$, $
\lambda_{M}=-0.040$ and $\lambda_{E}=-0.120$.

First, we consider case (a) by taking $a=-1$ and $\beta=0$. Shown in Fig.~\ref{pic1} is the bifurcation diagram of the localized solution \eqref{homoclinic trajectories} for two sets of coupling constants. The vertical axis is the norm defined as
\[
Norm = \sqrt{\int_{-L/2}^{L/2}|U(z)|^2+|V(z)|^2\,dz}.
\]
 By fixing the driving amplitude $\varepsilon$ and varying the gain-loss coefficient $\gamma$, we obtain a fold bifurcation. In the figure, we also present the existence boundary, i.e., the fold bifurcation, predicted by our Melnikov function analysis, which is obtained as the following.

In Fig.~\ref{pic2}, we plot the function $M$ of \eqref{e3} for two different values of gain-loss parameter, i.e.\ $\gamma=0$ and $\gamma=0.14/\varepsilon$, with $\lambda'_{M}=-0.020/\varepsilon$, $\lambda'_{E}=-0.040/\varepsilon$, $\lambda_{M}=-0.040/\varepsilon$, $\lambda_{E}=-0.120/\varepsilon$. The division by $\varepsilon$ is due to the scaling taken in transforming \eqref{eq7} into \eqref{eq10a}. When $\gamma=0$, we see that the function has many roots. As $\gamma$ increases, there is a critical value where all the roots become degenerate as shown in panel (b). In Fig.~\ref{pic1}, the critical value (multiplied by $\varepsilon$ due to the scaling) is depicted as the vertical dashed line, which well predicts the numerical results.

\begin{figure}[tbhp]
\begin{center}
\subfigure[]{\includegraphics[width=0.45\textwidth]{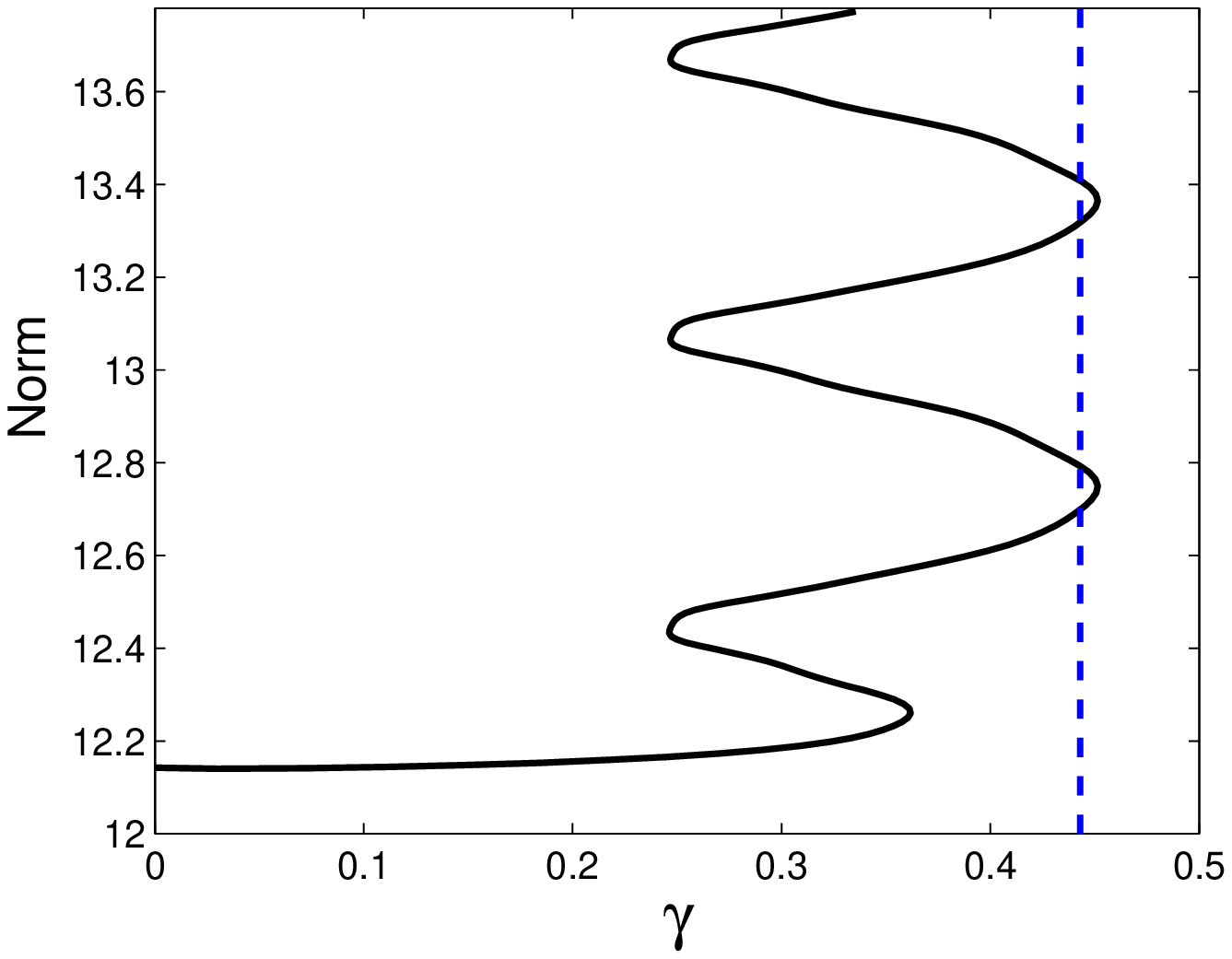}}
\subfigure[]{\includegraphics[width=0.45\textwidth]{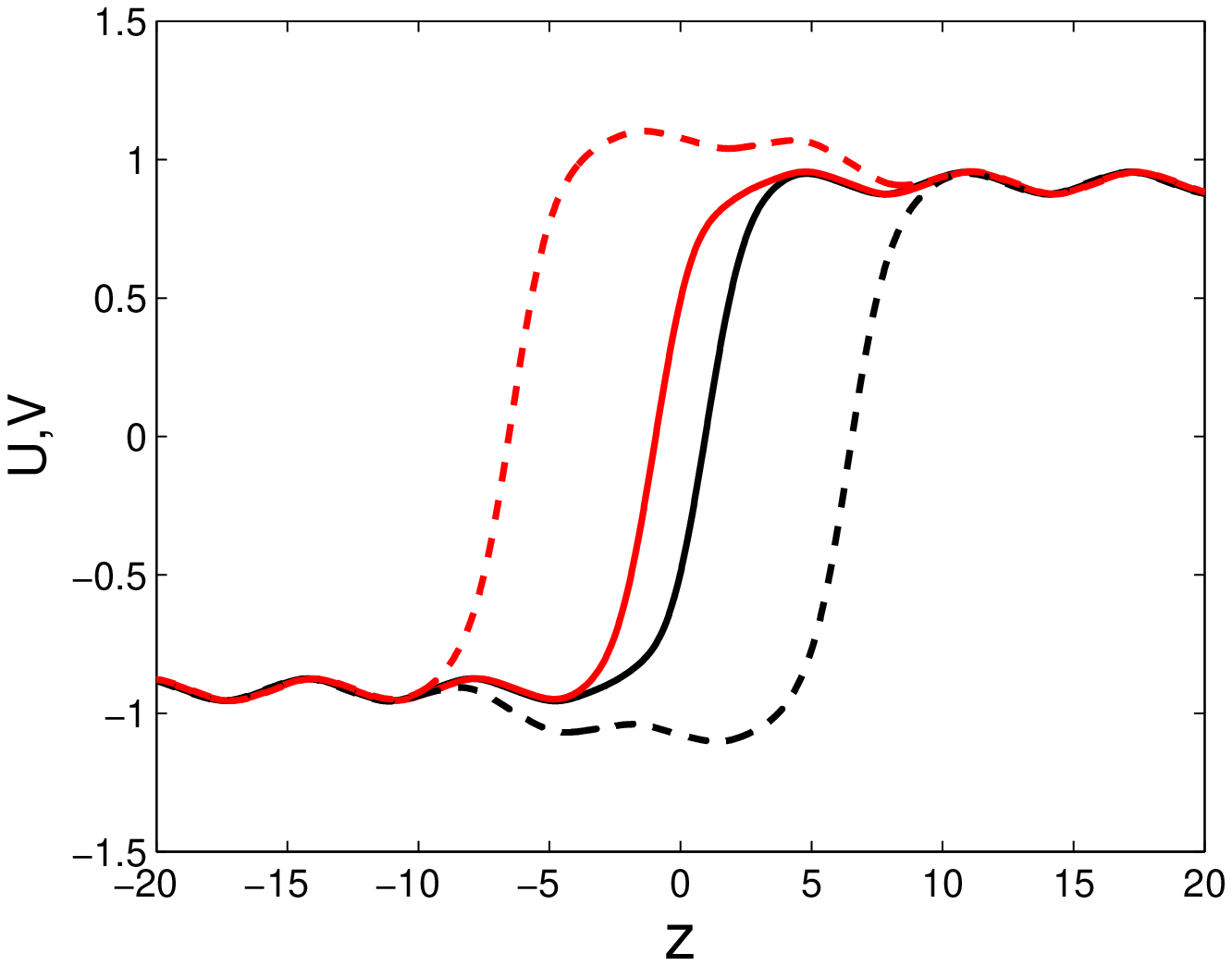}}
\end{center}
\caption{The same as Fig.~\ref{pic1}, but for $a=0$, $\beta=-1$ and $\varepsilon=0.1.$ Panel (b) shows the solutions for $\gamma=0.2$ and $\gamma=0.4$ (at the lowest branch). The right and left kink are $U$ and $V$, respectively. The vertical dashed line in panel (a) is the approximation obtained from the Melnikov function \eqref{e3b}.}\label{pic3}
\end{figure}

We also consider case (b) by taking $a=0$ and $\beta=-1$. Shown in Fig.~\ref{pic3} are the bifurcation diagram and the solution profiles. Different from case (a) above, interestingly as the gain-loss coefficient $\gamma$ varies we obtain a ``snaking'' profile in the bifurcation diagram, where there are many turning points. This phenomenon is now referred to as homoclinic snaking (see, e.g., \cite{matt11} and references therein), as it involves homoclinic or heteroclinic structures. In our current case, along the bifurcation diagram, the distance between the kinks increases. The snaking profile appears because of the locking mechanism between the kinks mediated by the oscillating tail that acts as a periodic (i.e.\ lattice) potential. It is important to note that our Melnikov function \eqref{e3b} could predict well the right boundary of the snaking curve. The persistence of the homoclinic snaking as well as the stability of solutions along the bifurcation diagram are beyond the scope of the present paper and will be reported elsewhere.

\section{Conclusion}
\label{sec4}

We have investigated localized travelling waves in a chain of PT-symmetric nonlinear metamaterials with gain and loss. We have developed a Melnikov function analysis for detecting the persistence of such waves in the system. By considering two cases of nonlinearity, we compared the theoretical analysis and computational results, where good agreement is obtained. One of the considered cases admits localized waves with  a bifurcating diagram exhibiting a ``snaking'' profile.

The existence and stability of periodic travelling waves (following and extending \cite{feck13,agao16} that dealt with only a single advance-delay differential equation), that may be more relevant from applications point of view than travelling localized waves considered herein that most likely are unstable, are addressed for future work.

\section*{Acknowledgments}
The work of M.A has been co-financed from resources of the operational program ``Education and Lifelong Learning'' of the European Social Fund and the National Strategic Reference Framework (NSRF) 2007-2013 within the framework of the Action State Scholarships Foundation's (IKY) mobility grants programme for the short term training in recognized scientific/research centres abroad for candidate doctoral or postdoctoral researchers in Greek universities or research centres. M.F. and M.P. were partially supported by the Slovak Grant Agency VEGA No. 2/0153/16 and No. 1/0078/17 and the Slovak Research and Development Agency under the contract  No. APVV-14-0378. VR has been co-financed by the the European Union (European Social Fund ESF) and Greek national funds through the Operational Program Education and Lifelong Learning of the National Strategic Reference Framework (NSRF) Research Funding Program D.534 MIS:379337: THALES. V.R and M.A  acknowledge support from FP7, Marie Curie Actions, People, International Research Staff Exchange Scheme (IRSES-606096). The authors acknowledge the referee for their useful comment and feedback.

\section*{Appendix A}
The terms in the Melnikov function \eqref{ins1} are given by the following integrals
$$
\begin{gathered}
M_{11} (\eta)=-\int_{-\infty}^{\infty} \frac{9\lambda_{M}'}{8a^2} \sech^{2}\frac{z}{2}\tanh\frac{z}{2}\sech^{4} \frac{z+\eta-p}{2}dz\\
+\int_{-\infty}^{\infty} \frac{9\lambda_{M}'}{4a^{2}} \sech^{2}\frac{z}{2}\tanh\frac{z}{2}\sech^{2} \frac{z+\eta-p}{2}\tanh^{2}\frac{z+\eta-p}{2}dz\\
-\int_{-\infty}^{\infty} \frac{9\lambda_{M}}{8a^2} \sech^{2}\frac{z}{2}\tanh\frac{z}{2}\sech^{4}\frac{z+\eta}{2}dz\\
+\int_{-\infty}^{\infty} \frac{9\lambda_{M}}{4a^{2}} \sech^{2}\frac{z}{2}\tanh\frac{z}{2}\sech^{2} \frac{z+\eta}{2}\tanh^{2}\frac{z+\eta}{2}dz\\
-\int_{-\infty}^{\infty} \frac{3\lambda_{E}'}{2a^{2}} \sech^{2}\frac{z}{2}\tanh\frac{z}{2}dz
+\int_{-\infty}^{\infty} \frac{9\lambda_{E}'}{4a^{2}}\sech^{2}\frac{z}{2}\tanh\frac{z}{2}\sech^{2}\frac{z+\eta-p}{2}dz\\
-\int_{-\infty}^{\infty} \frac{3\lambda_{E}}{2a^{2}} \sech^{2}\frac{z}{2}\tanh\frac{z}{2}dz
+\int_{-\infty}^{\infty} \frac{9\lambda_{E}}{4a^{2}} \sech^{2}\frac{z}{2}\tanh\frac{z}{2}\sech^{2}\frac{z+\eta}{2}dz\\
-\int_{-\infty}^{\infty} \frac{9\gamma}{4a^{2}} \sech^{4}\frac{z}{2}\tanh^{2}\frac{z}{2}dz\\
=-\frac{36}{a^2}\left(\lambda_M'I_1(\eta-p)+\lambda_MI_1(\eta)
	+\lambda_E'I_2(\eta-p)+\lambda_EI_2(\eta)\right)-\frac{6\gamma}{5a^2},
\end{gathered}
$$
$$
\begin{gathered}
M_{12}=-\int_{-\infty}^{\infty} \frac{3}{2a}  \sech^{2}\frac{z}{2}\tanh\frac{z}{2}\sin\Omega zdz
=-\frac{6\pi\Omega^2}{a\sinh\Omega\pi},
\end{gathered}
$$
$$
\begin{gathered}
M_{21} (\eta)=-\int_{-\infty}^{\infty}  \frac{9\lambda_{M}}{8a^2} \sech^{2}\frac{z}{2}\tanh\frac{z}{2}\sech^{4}\frac{z-\eta}{2}dz\\
+\int_{-\infty}^{\infty} \frac{9\lambda_{M}}{4a^{2}} \sech^{2}\frac{z}{2}\tanh\frac{z}{2}\sech^{2} \frac{z-\eta}{2}\tanh^{2}\frac{z-\eta}{2}dz\\
-\int_{-\infty}^{\infty} \frac{9\lambda_{M}'}{8a^2} \sech^{2}\frac{z}{2}\tanh\frac{z}{2}\sech^{4}\frac{z-\eta+p}{2}dz\\
+\int_{-\infty}^{\infty} \frac{9\lambda_{M}'}{4a^{2}} \sech^{2}\frac{z}{2}\tanh\frac{z}{2}\sech^{2} \frac{z-\eta+p}{2}\tanh^{2}\frac{z-\eta+p}{2}dz\\
-\int_{-\infty}^{\infty} \frac{3\lambda_{E}}{2a^{2}}\sech^{2}\frac{z}{2}\tanh\frac{z}{2}dz
+\int_{-\infty}^{\infty} \frac{9\lambda_{E}}{4a^{2}} \sech^{2}\frac{z}{2}\tanh\frac{z}{2}\sech^{2}\frac{z-\eta}{2}dz\\
-\int_{-\infty}^{\infty} \frac{3\lambda_{E}'}{2a^{2}} \sech^{2}\frac{z}{2}\tanh\frac{z}{2}dz
+\int_{-\infty}^{\infty} \frac{9\lambda_{E}'}{4a^{2}} \sech^{2}\frac{z}{2}\tanh\frac{z}{2}\sech^{2}\frac{z-\eta+p}{2}dz\\
+\int_{-\infty}^{\infty} \frac{9\gamma}{4a^{2}}\sech^{4}\frac{z}{2}\tanh^{2}\frac{z}{2}dz\\
=-\frac{36}{a^2}\left(\lambda_MI_1(-\eta)+\lambda_M'I_1(-\eta+p)
+\lambda_EI_2(-\eta)+\lambda_E'I_2(-\eta+p)\right)+\frac{6\gamma}{5a^2},
\end{gathered}
$$
$$
\begin{gathered}
M_{22}=-\int_{-\infty}^{\infty} \frac{3}{2a}\sech^{2}\frac{z}{2}\tanh\frac{z}{2}\sin\Omega zdz=-\frac{6\pi\Omega^2}{a\sinh\Omega\pi}
\end{gathered}
$$
where
$$
\begin{gathered}
I_1(z)=\frac{(5+50\eu^z-50\eu^{3z}-5\eu^{4z}+z(1+26\eu^z+66\eu^{2z}
+26\eu^{3z}+\eu^{4z}))\eu^z}{(1-\eu^z)^6},\\
I_2(z)=\frac{(3-3\eu^{2z}+z(1+4\eu^z+\eu^{2z}))\eu^z}{(1-\eu^z)^4}.
\end{gathered}
$$

\section*{Appendix B}
The integral terms in \eqref{ins2} are given by
$$
\begin{gathered}
N_{11}(\eta)=\int_{-\infty}^{\infty}  \frac{\lambda_{M}'}{\sqrt{2}}\sech^{2}\frac{z}{\sqrt{2}}\sech^{2} \frac{z-\eta-p}{\sqrt{2}}\tanh\frac{z-\eta-p}{\sqrt{2}}dz\\
+\int_{-\infty}^{\infty} \frac{\lambda_{M}}{\sqrt{2}} \sech^{2}\frac{z}{\sqrt{2}}\sech^{2} \frac{z-\eta}{\sqrt{2}}\tanh\frac{z-\eta}{\sqrt{2}}dz\\
-\int_{-\infty}^{\infty} \frac{\lambda_{E}}{\sqrt{2}}\sech^{2}\frac{z}{\sqrt{2}}\tanh\frac{z-\eta}{\sqrt{2}}dz
-\int_{-\infty}^{\infty} \frac{\gamma}{2}\sech^{4}\frac{z}{\sqrt{2}}dz\\
-\int_{-\infty}^{\infty} \frac{\lambda_{E}'}{\sqrt{2}}\sech^{2}\frac{z}{\sqrt{2}}\tanh\frac{z-\eta-p}{\sqrt{2}}dz\\
=8\left(\lambda_M'J_1(-\eta-p)+\lambda_MJ_1(-\eta)\right)
	+2\left(\lambda_E'J_2(-\eta-p)+\lambda_EJ_2(-\eta)\right)
	-\frac{2\sqrt{2}\gamma}{3},
\end{gathered}
$$
$$
N_{12}=\int_{-\infty}^{\infty} \frac{1}{\sqrt{2}} \sech^{2}\frac{z}{\sqrt{2}}\cos\Omega zdz
=\sqrt{2}\pi\Omega\csch\frac{\Omega\pi}{\sqrt{2}},
$$
$$
\begin{gathered}
N_{21} (\eta)=\int_{-\infty}^{\infty} \frac{\lambda_{M}}{\sqrt{2}} \sech^{2}\frac{z}{\sqrt{2}}\sech^{2} \frac{z+\eta}{\sqrt{2}}\tanh\frac{z+\eta}{\sqrt{2}}dz\\
+\int_{-\infty}^{\infty}\frac{\lambda_{M}'}{\sqrt{2}}\sech^{2}\frac{z}{\sqrt{2}}\sech^{2} \frac{z+\eta+p}{\sqrt{2}}\tanh\frac{z+\eta+p}{\sqrt{2}}dz\\
-\int_{-\infty}^{\infty} \frac{\lambda_{E}}{\sqrt{2}}\sech^{2}\frac{z}{\sqrt{2}}\tanh\frac{z+\eta}{\sqrt{2}}dz
-\int_{-\infty}^{\infty} \frac{\lambda_{E}'}{\sqrt{2}}\sech^{2}\frac{z}{\sqrt{2}}\tanh\frac{z+\eta+p}{\sqrt{2}}dz\\
+\int_{-\infty}^{\infty} \frac{\gamma}{2}\sech^{4}\frac{z}{\sqrt{2}}dz\\
=8\left(\lambda_MJ_1(\eta)+\lambda_M'J_1(\eta+p)\right)
	+2\left(\lambda_EJ_2(\eta)+\lambda_E'J_2(\eta+p)\right)
	+\frac{2\sqrt{2}\gamma}{3},
\end{gathered}
$$
$$
N_{22}=\int_{-\infty}^{\infty} \frac{1}{\sqrt{2}} \sech^{2}\frac{z}{\sqrt{2}}\cos\Omega zdz
=\sqrt{2}\pi\Omega\csch\frac{\Omega\pi}{\sqrt{2}}$$
where
$$
\begin{gathered}
J_1(z)=\frac{\left(3-3\eu^{2\sqrt{2}z}
	+\sqrt{2}z\left(1+4\eu^{\sqrt{2}z}+\eu^{2\sqrt{2}z}\right)\right)
	\eu^{\sqrt{2}z}}
	{(1-\eu^{\sqrt{2}z})^4},\\
J_2(z)=\frac{1-\eu^{2\sqrt{2}z}+2\sqrt{2}z\eu^{\sqrt{2}z}}
	{(1-\eu^{\sqrt{2}z})^2},
\end{gathered}
$$
and 
$$
\begin{gathered}
N_{11\pm}=\pm\int_{-\infty}^{\infty} \frac{\lambda_{E}+\lambda_{E}'}{\sqrt{2}}\sech^{2}\frac{z}{\sqrt{2}}dz-\int_{-\infty}^{\infty} \frac{\gamma}{2}\sech^{4}\frac{z}{\sqrt{2}}dz\\
=\pm 2(\lambda_{E}+\lambda_{E}')-\frac{2\sqrt{2}\gamma}{3},\\
N_{21\pm}=\mp\int_{-\infty}^{\infty} \frac{\lambda_{E}+\lambda_{E}'}{\sqrt{2}}\sech^{2}\frac{z}{\sqrt{2}}dz+\int_{-\infty}^{\infty} \frac{\gamma}{2}\sech^{4}\frac{z}{\sqrt{2}}dz\\
=\mp2(\lambda_{E}+\lambda_{E}')+\frac{2\sqrt{2}\gamma}{3}.
\end{gathered}
$$


\end{document}